\def\BibTeX{{\rm B\kern-.05em{\sc i\kern-.025em b}\kern-.08em
    T\kern-.1667em\lower.7ex\hbox{E}\kern-.125emX}}
\newtheorem{remark}{Remark}
\newtheorem{theorem}{Theorem}
\newtheorem{lemma}{Lemma}
\newtheorem{definition}{Definition}
\newtheorem{corollary}{Corollary}
\renewcommand*\env@matrix[1][c]{\hskip -\arraycolsep
	\let\@ifnextchar\new@ifnextchar
	\array{*\c@MaxMatrixCols #1}}
\begin{document}

\title{Second-Order Agents on Ring Digraphs
\thanks{The results were obtained with support of the Russian Science Foundation, project no. 16-11-00063 granted to IRE RAS.}
}

\author{
\IEEEauthorblockN{Sergei Parsegov}
\IEEEauthorblockA{\textit{Skolkovo Institute of Science and Technology;} \\
	\textit{V.A. Trapeznikov Institute of Control Sciences} \\
	\textit{of Russian Academy of Sciences}\\
	Moscow, Russia \\
	s.parsegov@skoltech.ru}
\and
\IEEEauthorblockN{Pavel Chebotarev}
\IEEEauthorblockA{\textit{ V.A. Trapeznikov Institute of Control Sciences} \\
	\textit{of Russian Academy of Sciences;}\\
		\textit{V.A. Kotelnikov Institute of Radioengineering and Electronics} \\
	\textit{of Russian Academy of Sciences}\\
	Moscow, Russia \\
	pavel4e@gmail.com}

}

\maketitle

\begin{abstract}
The paper addresses the problem of consensus seeking among second-order linear agents interconnected in a specific ring topology. Unlike the existing results in the field dealing with one-directional digraphs arising in various cyclic pursuit algorithms or two-directional graphs, we focus on the case where some arcs in a two-directional ring graph are dropped in a regular fashion. The derived condition for achieving consensus turns out to be independent of the number of agents in a network.
\end{abstract}

\begin{IEEEkeywords}
ring digraph, cyclic pursuit, second-order agents
\end{IEEEkeywords}

\section{Introduction}
Simple averaging control laws based on local interactions have paved the way to a new class of models in modern control theory and, more widely, in the interconnected dynamical systems theory. Such systems consist of a large (as usual) number of identical subsystems and are supposed to achieve certain global goals. The subsystems, or agents, are coupled in some way and therefore share an amount of common information. During the last 15 years, the complexity of models of networked systems increased significantly starting from a simple continuous-time consensus model (suprisingly first proposed by sociologists in 1964 and rediscovered much later), see \cite{ProskurnikovTempo2017} for details. It is convenient to analyze the evolution of these models considering separately three main entities comprising a dynamical network: agent complexity, interaction structure complexity, and link complexity. Recent results and challeging problems in this field may be found in the lecture course \cite{Bullo2018} and monographs \cite{Lewisetal2014} and \cite{LiDuan2017}. In this paper, we focus on specific interaction structures and analyze their properties in the case of second-order agents.

Control problems related to networks with specific communication patterns play an important role in cooperative/decentralized control. Within this trend, it is supposed that each agent interacts with a predefined number of its indexed neighbors. A pioneering work on consensus where a simple averaging rule was proposed and thoroughly studied was published in 1878 by J.G.~Darboux \cite{Darboux1878}. Although this problem dealt with the evolution of planar polygons, it turned out to be the first theoretical result on discrete-time \emph{cyclic pursuit}. This class of algorithms has a long history (see, e.g., \cite{Nahin2007}, \cite{Sharmaetal2013}, \cite{ElorBruckstein2011}, \cite{MarshallBrouckeFrancis2004}, and references therein), and has a wide range of applications including but not limited to numerous formation control tasks as patrolling, boundary mapping, etc. The cyclic pursuit is a strategy where agent $i$ pursues its neighbor $i - 1$ , while agent 1 pursues agent $n$, thus the topology of communication is a Hamiltonian cycle.  The extensions to hierarchical structures are considered in \cite{SmithBrouckeFrancis2005} and \cite{MukherjeeGhose2016};  \cite{SinhaGhose2006} addresses the case of heterogeneous agents; geometrical problems related to cyclic pursuit-like algorithms are investigated in \cite{ElmachtoubVanLoan2010} and \cite{Shcherbakov2011}. Some pursuit algorithms utilize the rotation operator in order to follow the desired trajectories, see \cite{RamirezRiberosPavoneetal2010} and references therein. Another group of strategies/protocols are based on two-directional topologies, i.e., each agent $i$ has the relative information of the neighbors $i-1$ and $i+1$ (with $0 \equiv n$).  For example, in \cite{MukherjeeZelazo2018} the agents are interconnected by a two-directional ring. The row straightening problems studied in \cite{WagnerBruckstein1997}, \cite{KvintoParsegov2012}, \cite{ProskurnikovParsegov2016} also imply symmetric communications except for the fixed ``anchors'' (the endpoints of the segment). Theoretical motivation behind studying regular network structures is that for some cases, this may lead to the closed-form computation of the spectra of the corresponding Laplacian matrices.

In what follows, we study the problem of reaching consensus for second-order agents with velocity damping (friction). The models of such kind naturally arise,  e.g., in energy systems \cite{Goldin2013} and formation control \cite{RenCao2011, Bullo2018, Ren2008a}. The communication topology  studied in the present paper is a digraph $\mathcal{G}$ with a specific structure: it has $n = 2m$ vertices, $m \geq 3$, and contains a Hamiltonian cycle supplemented by the inverse cycle, where every second arc is dropped, see Fig.~\ref{fig:graph}. In some sense, this ``intermediate'' digraph with regular structure lies between the two-directional ring and the Hamiltonian cycle appearing in cyclic pursuit algorithms. The obtained results fill the gap between undirected and directed ring topologies for this special case: the loss of arcs may result in instability of the whole system and therefore a challenging problem is to derive a suitable condition on the tunable damping parameter that guarantees convergence to consensus irrespective of the number of agents in the network. Alternatively, one may consider the opposite transformation, that is addition of arcs to an unidirected communication structure in order to reduce relatively high damping coefficients of agents.

The paper is organized as follows. The next section presents the notations used in the paper. Section~3 introduces some mathematical preliminaries needed for further exposition and discusses the statement of the problem. The main theoretical result is formulated and proved in Section~4. Finally, the results of numerical simulation and conclusions are given.

\section{Notations}
The following notation will be used throughout the paper:
\begin{itemize}
	\item $j := \sqrt {-1}$ denotes the imaginary unit, the letters $i$ and $k$ are reserved for indices;
 	\item the unit vector is defined by $\textbf{1}_{n}:=[1,1,\ldots,1]^{\top} \in \mathbb{R}^{n}$;
 	\item $\otimes$ denotes the Kronecker product of two matrices.
\end{itemize}

Below we consider some definitions and auxiliary lemmas needed for further discussion and give an exact problem formulation.

\section{Preliminaries and Problem Statement}

Throughout the paper, we consider a group of $n$ identical agents with a directed communication topology and suppose that $n = 2m$, where $m \geq 3$. Each agent obeys the second-order dynamics of the form
\begin{equation}
\label{eq:agent0}
\left\{
  \begin{array}{lr}
    \dot x_i = v_i, \\
    \dot v_i = -\gamma v_i + u_i, 
  \end{array} \quad i \in 1:n,
\right.
\end{equation}
where $x_i, v_i, u_i \in \mathbb{R}$ are the position coordinate of the $i$th agent, its velocity and control input, respectively; $\gamma > 0$ denotes the damping coefficient.

The state-space representation of system \eqref{eq:agent0} writes as
\begin{equation}
\label{eq:agent1}
\dot \xi_i = A\xi_i+Bu_i,
\end{equation}
where $\xi_i=\left[x_i, v_i \right]^{\top}$, $A = \begin{bmatrix}[r] 0 & 1\\ 0 & -\gamma \end{bmatrix}$, and $B = \left[0, 1 \right]^{\top}$.

Suppose that the agents interact through the topology depicted in Fig.~\ref{fig:graph}.
\begin{figure}[h!]
  \centering
    \includegraphics[width=5.5cm]{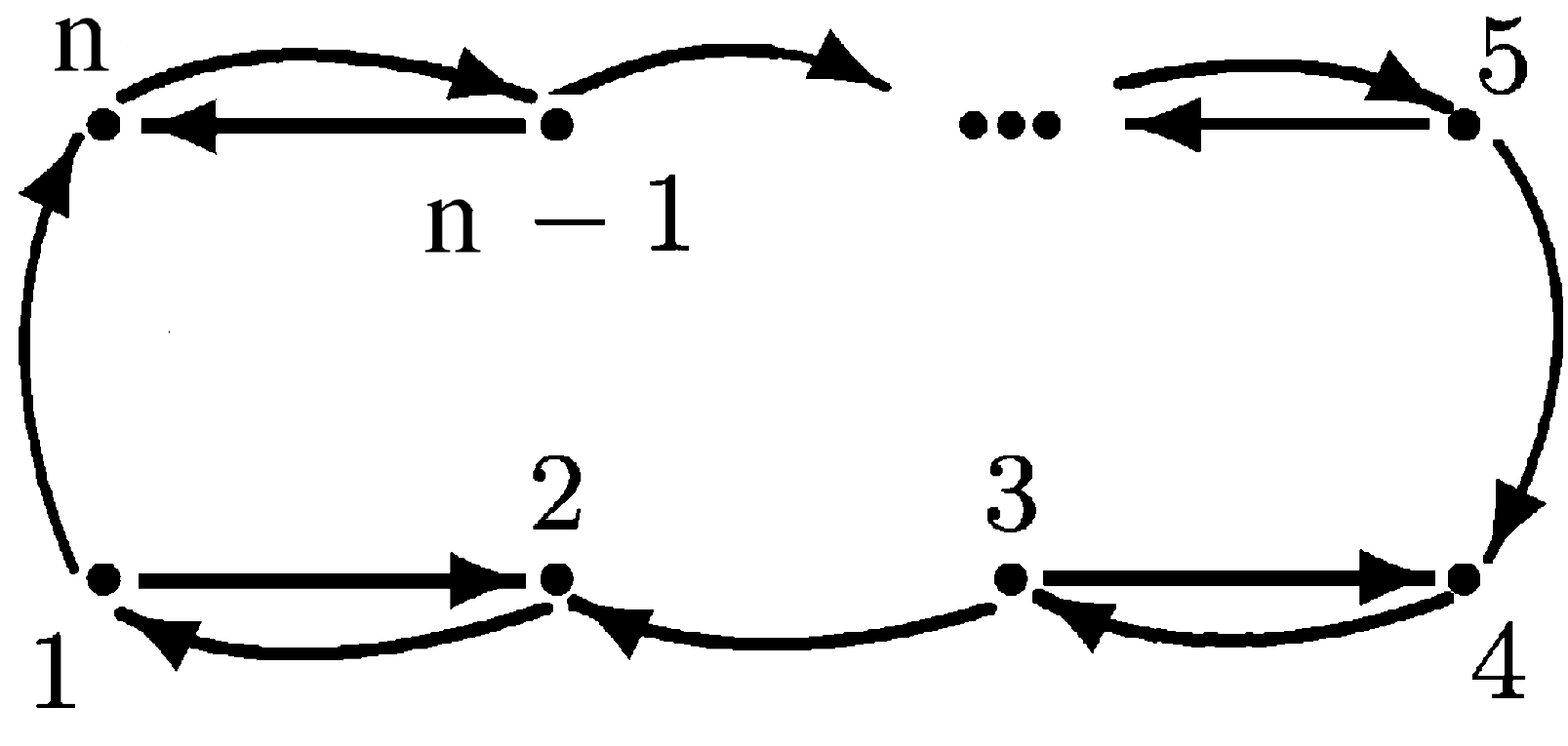}
    \caption{The interaction topology between agents \eqref{eq:agent0}.}
  \label{fig:graph}
\end{figure}
Thus, each agent knows the relative distances between itself and the nearest one or two indexed neighbors. Let $K \in \mathbb{R}^{1 \times 2}$ be the matrix $K = \left[1, 0 \right]$, therefore the decentralized control protocol has the form
\begin{equation}
\label{eq:controlprotocol1}
u_i  \hspace{-2pt}=\hspace{-2pt} \left\{
  \begin{array}{lr}
    \hspace{-6pt}K(\xi_{i+1}-\xi_i) + K(\xi_{i-1}-\xi_i), ~ \text{$i$ is odd}, \\
    \hspace{-6pt}K(\xi_{i-1}-\xi_i), \quad \quad \quad \quad \quad \quad  ~~ \text{$i$ is even}, 
  \end{array}\hspace{-2pt} i \in\hspace{-2pt} 1:n.
\right.
\end{equation}

The main objective of the paper is to establish conditions for a network of second-order agents \eqref{eq:agent0}, \eqref{eq:agent1} governed by protocol \eqref{eq:controlprotocol1} that guarantee consensus in the sense of
\begin{equation}
\label{eq:consensus}
\lim_{t \rightarrow \infty}{\| \xi_i(t) -\xi_k(t) \|} = 0, \; \forall i,k \in 1:n.
\end{equation}
These conditions turn out to be independent on the number of agents comprising the network system.

The following definitions and results are used in the further considerations.

\subsection{Graph Theory}
We suppose that the communication network is represented by a fixed, directed graph $\mathcal{G} = (\mathcal{V,E})$, where $\mathcal{V} = \{1,\ldots,n\}$ denotes the vertex (or node) set and $\mathcal{E}$ is the set of arcs.
We also assume that the graph is simple, i.e., it has no self-loops, and no multiple arcs. Let us define the Laplacian matrix of an unweighted digraph $\mathcal{G}$ as follows:
\begin{definition}[\hspace{-4pt} \cite{AgaevChebotarev2002}]
\label{def:laplacian}
The Laplacian matrix of a digraph $\mathcal{G}$ is the matrix $\mathcal{L} = (l_{ik} ) \in \mathbb{R}^{n \times n} $ in which, for $k \neq i$, $l_{ik} = -1$ whenever $(i,k) \in \mathcal{E(G)}$, otherwise $l_{ik} = 0$. The diagonal entries of $\mathcal{L}$ are of the form $l_{ii} = -\sum_{k \neq i}l_{ik}$, $i,k \in  \mathcal{V(G)}$.
\end{definition}

\begin{lemma}[\hspace{-3pt} \cite{AgaevChebotarev2000}, \cite{AgaevChebotarev2001}]
\label{lem:spannningtree}
The Laplacian matrix $\mathcal{L}$ of a directed graph $\mathcal{G}$ has at least one zero eigenvalue with $\emph{ \textbf{1}}_n $ as a corresponding right eigenvector and all nonzero eigenvalues have positive real parts. Furthermore, zero is a simple eigenvalue of $\mathcal{L}$ if and only if $\mathcal{G}$ has a spanning converging tree, i.e., it has at least one vertex accessible from all other vertices.
\end{lemma}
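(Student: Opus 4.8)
The plan is to prove the two halves separately: the location of the spectrum, then the combinatorial characterization of when $0$ is simple.

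First I would note that, by Definition~\ref{def:laplacian}, every row of $\mathcal{L}$ sums to zero, so $\mathcal{L}\,\mathbf{1}_n = 0$ and $0$ is an eigenvalue with right eigenvector $\mathbf{1}_n$. For the remaining eigenvalues I would invoke Gershgorin's theorem: since the off-diagonal entries of row $i$ are $0$ or $-1$ and $l_{ii}$ equals the out-degree $d_i^{+}$ of vertex $i$, the $i$-th Gershgorin disc is $\{z:\,|z-d_i^{+}|\le d_i^{+}\}$, which lies in the closed right half-plane and touches the imaginary axis only at the origin. Hence every eigenvalue has nonnegative real part, and an eigenvalue on the imaginary axis can only be $0$; equivalently, every nonzero eigenvalue has strictly positive real part.

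For the simplicity statement I would analyze $\ker\mathcal{L}$. A vector $x$ lies in $\ker\mathcal{L}$ iff $d_i^{+} x_i = \sum_{k:\,(i,k)\in\mathcal{E}} x_k$ for each $i$ with $d_i^{+}>0$, i.e. $x_i$ is the average of its out-neighbors' values. A maximum-principle argument then shows that if $x$ attains its maximum at $v$ it is constant and maximal on every vertex reachable from $v$; since every vertex reaches some terminal strongly connected component (a sink of the condensation DAG), $x$ is constant on each such component, and two harmonic vectors agreeing on all terminal components coincide. Conversely, the absorption probabilities of the uniform out-neighbor random walk into the distinct terminal components give $s$ linearly independent harmonic vectors summing to $\mathbf{1}_n$, where $s$ is the number of terminal components. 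Thus $\dim\ker\mathcal{L}=s$. Finally, $\mathcal{G}$ has a vertex accessible from all others precisely when the condensation DAG has a unique sink, i.e. when $s=1$.

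It remains to upgrade geometric simplicity to algebraic simplicity, which I expect to be the main obstacle. I would order the vertices along a topological sort of the condensation, making $\mathcal{L}$ block upper-triangular with one diagonal block per strongly connected component. A block of a non-terminal component is an irreducible $Z$-matrix with nonnegative row sums, at least one positive, hence a nonsingular $M$-matrix; a block of a terminal component is exactly the Laplacian of a strongly connected digraph (the $1\times 1$ sink case being trivial). For the latter base case I would use Perron--Frobenius: writing $\mathcal{L}=D^{+}(I-P)$ with $P$ an irreducible stochastic matrix, its stationary distribution yields a strictly positive left null vector $w>0$, and since $w^{\top}\mathbf{1}_n>0$ there can be no Jordan block at $0$, so $0$ is algebraically simple for that block. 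Summing algebraic multiplicities over the diagonal blocks gives algebraic multiplicity $s$ for $\mathcal{L}$, equal to the geometric multiplicity; combined with the previous paragraph, $0$ is a simple eigenvalue iff $s=1$ iff $\mathcal{G}$ has a spanning converging tree.
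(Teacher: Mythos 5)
Your argument is correct, but note that the paper does not prove this lemma at all: it is quoted with attribution to Agaev and Chebotarev, whose own derivation rests on the matrix-forest (all-minors matrix-tree) theorem --- the coefficients of the characteristic polynomial of $\mathcal{L}$ are signed counts of spanning out-forests, so the multiplicity of the zero root is read off combinatorially as $n$ minus the maximal number of arcs in a spanning forest, and the spanning-converging-tree criterion drops out in one stroke. Your route is genuinely different: Gershgorin for the half-plane location (standard and fine), then a harmonic-function/maximum-principle analysis of $\ker\mathcal{L}$ giving geometric multiplicity equal to the number $s$ of terminal strong components, and finally the block-triangularization along the condensation with Taussky-type nonsingularity for non-terminal blocks and a Perron--Frobenius left null vector $w>0$ with $w^{\top}\mathbf{1}>0$ to exclude Jordan blocks at the terminal blocks. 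All of these steps are sound (including the frequently-overlooked upgrade from geometric to algebraic simplicity, which you handle properly). What the forest-theorem approach buys is a single identity covering algebraic multiplicity directly, plus explicit formulas for the null space in terms of maximum out-forests; what your approach buys is self-containedness from classical linear algebra and a probabilistic interpretation of the kernel via absorption probabilities. Either is an acceptable proof of the cited statement.
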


Obviously, the digraph $\mathcal{G}$ depicted in Fig.~\ref{fig:graph} is simple, unweighted, and contains a spanning converging tree.

The system of $n$ agents \eqref{eq:agent0} with feedback protocol~\eqref{eq:controlprotocol1} obeys the following dynamics:
\begin{equation}
\label{eq:syssecondorder}
\ddot x+\gamma \dot x = - \mathcal{L}x,
\end{equation}
where $x = [x_{1}, \ldots, x_{n}]^{\top}$ and $\mathcal{L}$ is the Laplacian matrix associated with the dependency
digraph $\mathcal{G}$:

\begin{equation}
\label{eq:laplacian}
\mathcal{L} =
\begin{bmatrix}[r]
2 & -1 & 0 & 0 & \cdots & 0 & -1\\
-1 & 1 & 0 & 0 & \cdots & 0 & 0 \\
0 & -1 & 2 & -1 & \cdots & 0 & 0  \\
\vdots & \vdots & \ddots & \ddots & \ddots & \vdots & \vdots \\
0 & \cdots & 0 & -1 & 1 & 0 & 0 \\
0 & \cdots & 0 & 0 & -1 & 2 & -1  \\
0 & \cdots & 0 & 0 & 0 & -1 & 1
 \end{bmatrix}.
\end{equation}

Also note that this  closed-loop network system can be equivalently described by
\begin{equation}
\label{eq:network0}
\dot \xi = F\xi,
\end{equation}
where $F\in \mathbb{R}^{2n \times 2n} $ has the form
\begin{equation}
\label{eq:F}
F = I \otimes A -\mathcal{L} \otimes BK,
\end{equation}
and $I \in \mathbb{R}^{n \times n}$ denotes the identity matrix.

Although the consensus conditions \eqref{eq:syssecondorder}, \eqref{eq:network0} can be verified in a straightforward way, this may be computationally expensive in the case of a large number of agents. The framework presented below allows to reduce the problem to a couple of simpler ones using the notion of consensus region. The criterion for scalar agents was first proposed by Polyak and Tsypkin in \cite{PolyakTsypkin1996}; similar results were obtained later in \cite{HaraHayakawaetal2007} and \cite{HaraTanakaIwasaki2014}. Some other extensions may be found in \cite{Lewisetal2014} and \cite{LiDuan2017}.

\subsection{Consensus Region}
Suppose that the consensus problem is studied for a networked dynamical system described by the equation
$$
\phi(s)x = -\mathcal{L}x,
$$
where $\phi(s)$ is a scalar polynomial, $s  :=  \frac{d}{dt}$, and $\mathcal{L}$ is the Laplacian matrix of the dependency digraph containing a spanning converging tree.
\begin{definition}[\hspace{-4pt} \cite{PolyakTsypkin1996, HaraTanakaIwasaki2014, LiDuan2017}]
\label{def:consensusregion}
The $\Omega$-region  of the function $\phi(s)$ is the set of points $\lambda$ on the complex plane for which the function $\phi(s) -\lambda$ has no zeros in the closed right half-plane\emph{:}
$$
\Omega = \{\lambda \in \mathbb{C}: \phi(s)-\lambda \neq 0 \text{ whenever } \emph{Re}(s) \geq 0\}.
$$
\end{definition}
Note that $\Omega$ is precisely the region of parameters $\lambda$ such that the matrix $A-\lambda BK$ is Hurwitz stable. The function $\phi(s)$ is sometimes referred to as the \emph{generalized frequency variable}~\cite{HaraHayakawaetal2007, HaraTanakaIwasaki2014}.

\begin{lemma}[\hspace{-4pt} \cite{PolyakTsypkin1996, HaraTanakaIwasaki2014, LiDuan2017}]
\label{lem:consensusregion}
The system described by \eqref{eq:agent1} reaches consensus under protocol \eqref{eq:controlprotocol1} if and only if
$$
\lambda_i \in \Omega,\quad i \in 2:n,
$$
where $\lambda_i$, $i \in 2:n$, are the nonzero eigenvalues of $-\mathcal{L}$.
\end{lemma}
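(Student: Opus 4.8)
\emph{Proof idea.} The plan is to reduce the $2n$-dimensional closed-loop system \eqref{eq:network0}--\eqref{eq:F} to $n$ decoupled second-order blocks by triangularizing the coupling matrix $\mathcal{L}$, and then to isolate which of these blocks must be stable in order that \eqref{eq:consensus} hold. First I would invoke Lemma~\ref{lem:spannningtree}: since $\mathcal{G}$ possesses a spanning converging tree, $0$ is a \emph{simple} eigenvalue of $\mathcal{L}$ with right eigenvector $\mathbf{1}_n$, and the remaining eigenvalues $\mu_2,\dots,\mu_n$ have positive real parts, so that $\lambda_i:=-\mu_i$, $i\in 2:n$, are exactly the nonzero eigenvalues of $-\mathcal{L}$. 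Because the zero eigenvalue is simple, $\mathbb{C}^n$ splits as $\mathrm{span}\{\mathbf{1}_n\}\oplus\mathcal{W}$ into two $\mathcal{L}$-invariant subspaces, hence there is an invertible $P=[\mathbf{1}_n\mid P_2]$ with $P^{-1}\mathcal{L}P=\mathrm{diag}(0,J)$, where $J\in\mathbb{C}^{(n-1)\times(n-1)}$ may be taken upper triangular with diagonal entries $\mu_2,\dots,\mu_n$.

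Next I would apply the change of variables $\tilde{\xi}=(P^{-1}\otimes I_2)\xi$ to \eqref{eq:network0}. Using the mixed-product property of $\otimes$, the transformed matrix is $(P^{-1}\otimes I_2)F(P\otimes I_2)=I_n\otimes A-(P^{-1}\mathcal{L}P)\otimes BK=\mathrm{blkdiag}\bigl(A,\;I_{n-1}\otimes A-J\otimes BK\bigr)$. Thus the dynamics decouple into a ``common-motion'' block $\dot{\tilde{\xi}}_1=A\tilde{\xi}_1$ (with spectrum $\{0,-\gamma\}$) and a ``disagreement'' block $\dot z=(I_{n-1}\otimes A-J\otimes BK)z$; since $J$ is triangular, the latter has block-triangular system matrix with diagonal blocks $A-\mu_i BK$, $i\in 2:n$, whose characteristic polynomial is $\prod_{i=2}^n\det\bigl(sI_2-(A-\mu_i BK)\bigr)=\prod_{i=2}^n\bigl(s^2+\gamma s+\mu_i\bigr)=\prod_{i=2}^n\bigl(\phi(s)-\lambda_i\bigr)$ with $\phi(s)=s^2+\gamma s$.

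I would then relate \eqref{eq:consensus} to the stability of the disagreement block. The pairwise differences $\xi_i-\xi_k$ can be written as $(M\otimes I_2)\xi$ for a matrix $M$ with $\ker M=\mathrm{span}\{\mathbf{1}_n\}$; in the new coordinates $(M\otimes I_2)\xi=(MP_2\otimes I_2)z$, and $MP_2$ has full column rank because the columns of $P_2$ are linearly independent of $\mathbf{1}_n$. Hence \eqref{eq:consensus} holds for all initial data if and only if $z(t)\to 0$ for all initial data, i.e.\ if and only if $I_{n-1}\otimes A-J\otimes BK$ is Hurwitz, i.e.\ if and only if each diagonal block $A-\mu_i BK$ is Hurwitz, i.e.\ if and only if $\phi(s)-\lambda_i$ has no zero with $\mathrm{Re}(s)\ge 0$ for every $i\in 2:n$ --- which is precisely $\lambda_i\in\Omega$ by Definition~\ref{def:consensusregion}.

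The step I expect to be the main obstacle is this last reduction: one must argue carefully that condition \eqref{eq:consensus} --- which asks only that disagreements vanish, not that the whole network converge to a fixed point --- is captured exactly by the reduced block and is unaffected by the persistent (non-decaying) mode of the $A$-block; for this it is essential that the generalized eigenspace of $\mathcal{L}$ at $0$ be one-dimensional, which is where the spanning-converging-tree hypothesis enters. The remaining steps (the Kronecker-product bookkeeping and the identification $\det(sI_2-(A-\mu_i BK))=s^2+\gamma s+\mu_i=\phi(s)-\lambda_i$) are routine; a fully rigorous write-up would also note that even when $J$ fails to be diagonalizable, block-triangularity still suffices, since the spectrum of a block-triangular matrix is the union of the spectra of its diagonal blocks.
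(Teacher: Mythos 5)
The paper does not prove this lemma at all: it is imported as a known result with citations to Polyak--Tsypkin, Hara et al.\ and Li--Duan, so there is no in-paper argument to match yours against. Your proof is the standard modal-decomposition argument that underlies those references, and it is essentially correct: the similarity $P=[\mathbf{1}_n\mid P_2]$ splitting off the simple zero eigenvalue, the Kronecker bookkeeping giving diagonal blocks $A-\mu_i BK$, the identification $\det\bigl(sI_2-(A-\mu_i BK)\bigr)=s^2+\gamma s+\mu_i=\phi(s)-\lambda_i$, and the equivalence between \eqref{eq:consensus} and Hurwitz stability of the disagreement block all check out against the paper's Definition~\ref{def:consensusregion} (which uses the closed right half-plane, matching ``Hurwitz'' exactly). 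The one place a referee would ask you to tighten is the ``only if'' direction: you let $z(0)$ range over ``all initial data,'' but with a complex triangularizing $P$ the image of the real initial conditions $\xi(0)\in\mathbb{R}^{2n}$ is only a real $2(n-1)$-dimensional subspace of $\mathbb{C}^{2(n-1)}$, so the claim ``$z(t)\to 0$ for all initial data iff the block is Hurwitz'' needs either a real similarity (e.g.\ a real Schur form of $\mathcal{L}$ restricted to a real invariant complement of $\mathrm{span}\{\mathbf{1}_n\}$, whose complexified spectrum is still $\{\mu_2,\dots,\mu_n\}$) or an explicit argument that the reachable set of $z(0)$ meets the unstable subspace. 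This is a routine fix, not a flaw in the approach; with it, your write-up is a complete proof of a statement the paper leaves to the literature.
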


The details of determining the consensus region may be found in \cite{PolyakTsypkin1996}; in the case of $\phi(s) = s^2+\gamma s$, $\gamma > 0$, this region has form of the interior of a parabola in the complex plane: $\phi(j\omega) = -\omega^2+j\gamma \omega$, $-\infty < \omega < \infty$.


\subsection{Cyclotomic Equation}

\begin{lemma}[\hspace{-4pt} \cite{Neumann2007}]
\label{lem.cyc}
The roots of the cyclotomic equation
$$
z^m -1 = 0
$$
are the de Moivre numbers
$$
z_k = e^{j \frac{2 \pi k}{m}}, \; k \in 0:m-1.
$$

They form a regular polygon with each vertex lying on the unit circle in the complex plane.
\end{lemma}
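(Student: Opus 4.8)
The plan is to pass to polar form and invoke Euler's formula, which the notation $j := \sqrt{-1}$ already anticipates. Write an arbitrary complex number as $z = r\,e^{j\theta}$ with $r \geq 0$ and $\theta \in \mathbb{R}$, so that $z^m = r^m e^{j m \theta}$. Since $1 = 1\cdot e^{j\cdot 0}$, the equation $z^m - 1 = 0$ splits into a modulus condition $r^m = 1$ and an argument condition $m\theta \equiv 0 \pmod{2\pi}$. Because $r$ is a nonnegative real, $r^m = 1$ has the unique solution $r = 1$; the argument condition gives $\theta \in \{\,2\pi k/m : k \in \mathbb{Z}\,\}$. Reducing $k$ modulo $m$ then shows that the solutions are exactly $z_k = e^{j 2\pi k/m}$ for $k \in 0:m-1$.

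Next I would verify that these are genuinely $m$ distinct numbers and that none is missed. Distinctness follows from $e^{j 2\pi k/m} = e^{j 2\pi l/m}$ iff $(k-l)/m \in \mathbb{Z}$, which fails for $0 \le k < l \le m-1$. Since $z^m - 1$ is a polynomial of degree $m$, it has at most $m$ roots, so the $m$ points $z_0,\dots,z_{m-1}$ exhaust the root set; equivalently one may substitute directly, $z_k^m = e^{j 2\pi k} = 1$, to confirm each $z_k$ is a root and then invoke the degree bound. This simultaneously yields the factorization $z^m - 1 = \prod_{k=0}^{m-1}(z - z_k)$.

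Finally, for the geometric assertion: every $z_k$ satisfies $|z_k| = 1$, so all roots lie on the unit circle, and consecutive arguments differ by the constant angle $2\pi/m$, hence the points are equally spaced around the circle; joining them in increasing order of $k$ produces a regular $m$-gon inscribed in the unit circle. There is essentially no hard step here — the only points that require a little care are justifying that $r^m = 1$ with $r \ge 0$ forces $r = 1$, and confirming that precisely $m$ of the numbers $e^{j 2\pi k/m}$ are distinct so that the degree-$m$ bound is tight; both are routine.
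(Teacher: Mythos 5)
Your argument is correct and entirely standard: polar decomposition, the modulus/argument split, distinctness of the $m$ candidates, and the degree bound together give exactly the classical description of the $m$th roots of unity, and the geometric claim follows from $|z_k|=1$ and the constant angular spacing $2\pi/m$. Note, however, that the paper does not prove this lemma at all --- it is imported verbatim as a cited classical fact from the reference on cyclotomy --- so there is no in-paper proof to compare against; your write-up simply supplies the routine verification the authors chose to omit.
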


\subsection{Cassini Ovals}
\begin{definition}[\hspace{-4pt} \cite{Lawrence1972}]
	\label{def:cassini}
The Cassini curve \emph{(}or the Cassini ovals\emph{)} is a quartic curve  defined as the set of points in the plane such that the product of the distances \emph{(}denoted by $b^2$\emph{)} to two fixed points $(a, 0)$ and $(-a, 0)$ is constant\emph{:}
\begin{equation}\label{eq.cascanonical}
[(x-a)^2+y^2][(x+a)^2+y^2] = b^4.
\end{equation}
\end{definition}

\section{Main Results}
Based on the above problem formulation and preliminary results, in this section we derive analytic conditions of achieving consensus for network systems \eqref{eq:syssecondorder}, \eqref{eq:network0}. These conditions turn out to be independent of the number of agents.

First, we study the spectrum of the Laplacian matrix defined by \eqref{eq:laplacian} and obtain the curves which are the loci of its eigenvalues.

\begin{lemma}
\label{lem:laplacian}
The eigenvalues of Laplacian matrix \eqref{eq:laplacian} have the form
$$
\lambda_k^{(1,2)} = 1.5 \pm 0.5\sqrt{5 + 4e^{j \frac{2 \pi k}{m}}}
$$
and lie on the Cassini ovals defined by
\begin{equation}\label{eq:casmy}
[(\tilde x-\sqrt{5})^2+\tilde y^2][(\tilde x+\sqrt{5})^2+\tilde y^2] = 2^4,
\end{equation}
where $\tilde x = 2(x-3/2)$, $\tilde y = 2y$.

\end{lemma}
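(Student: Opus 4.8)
The plan is to exploit the block-circulant structure that \eqref{eq:laplacian} acquires once the $n=2m$ vertices are grouped into the $m$ consecutive pairs $\{2i-1,2i\}$, $i\in 1{:}m$. First I would check, by reading off the rows of \eqref{eq:laplacian}, that in this $2\times2$ block partition
$$
\mathcal{L} = I_m\otimes D_0 + C_m\otimes D_{-1},
\qquad
D_0=\begin{bmatrix} 2 & -1\\ -1 & 1\end{bmatrix},\qquad
D_{-1}=\begin{bmatrix} 0 & -1\\ 0 & 0\end{bmatrix},
$$
where $C_m$ is the $m\times m$ cyclic shift matrix ($(C_m)_{i,i-1}=1$, indices mod $m$): the odd rows of \eqref{eq:laplacian} supply the first row of $D_0$ together with the single off-block entry $l_{2i-1,2i-2}=-1$ (the corner term $l_{1,n}=-1$ being exactly the wrap-around $\mathcal{L}_{1,m}=D_{-1}$), the even rows supply the second row of $D_0$, and no arc points forward, so the $+1$ block vanishes.

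Next I would diagonalize the circulant part. By Lemma~\ref{lem.cyc} the eigenvalues of $C_m$ are the de Moivre numbers $\omega_k=e^{j2\pi k/m}$, $k\in 0{:}m-1$, so the unitary transformation $F_m\otimes I_2$ (with $F_m$ the $m\times m$ Fourier matrix, which diagonalizes $C_m$) carries $\mathcal{L}$ into $\mathrm{blockdiag}(M_0,\dots,M_{m-1})$, where
$$
M_k = D_0+\omega_k D_{-1}=\begin{bmatrix} 2 & -1-\omega_k\\ -1 & 1\end{bmatrix},
$$
and hence $\det(\lambda I_n-\mathcal{L})=\prod_{k=0}^{m-1}\det(\lambda I_2-M_k)$. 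Computing $\det(\lambda I_2-M_k)=\lambda^2-3\lambda+1-\omega_k$ and solving the quadratic gives
$$
\lambda_k^{(1,2)}=\tfrac32\pm\tfrac12\sqrt{9-4(1-\omega_k)}=1.5\pm0.5\sqrt{5+4e^{j\frac{2\pi k}{m}}},
$$
which is the claimed eigenvalue formula (both square-root branches occur, once for each $k$).

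Finally I would eliminate the parameter to identify the locus. Writing $\lambda=x+jy$ and setting $\tilde\lambda:=2\lambda-3=\tilde x+j\tilde y$ with $\tilde x=2(x-3/2)$, $\tilde y=2y$, the formula is equivalent to $\tilde\lambda^2=5+4e^{j\theta}$, $\theta=2\pi k/m$, so $\tilde\lambda^2-5$ has modulus $4$. Factoring $\tilde\lambda^2-5=(\tilde\lambda-\sqrt5)(\tilde\lambda+\sqrt5)$ and squaring $|\tilde\lambda^2-5|=4$ yields
$$
|\tilde\lambda-\sqrt5|^2\,|\tilde\lambda+\sqrt5|^2=\big[(\tilde x-\sqrt5)^2+\tilde y^2\big]\big[(\tilde x+\sqrt5)^2+\tilde y^2\big]=2^4,
$$
which is precisely \eqref{eq:casmy}; by Definition~\ref{def:cassini} this is the Cassini oval with foci $(\pm\sqrt5,0)$ and constant product $b^2=4$. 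Since every eigenvalue $\lambda_k^{(1,2)}$ satisfies $\tilde\lambda^2=5+4e^{j\theta}$, all of them lie on this curve.

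I expect the only delicate point to be the bookkeeping in the first step — matching each scalar entry of the $2m\times2m$ matrix \eqref{eq:laplacian}, including the corner wrap-around term, to the compact form $I_m\otimes D_0+C_m\otimes D_{-1}$ and verifying the orientation of $C_m$. Once that identity is established, the spectral reduction via Lemma~\ref{lem.cyc} and the algebraic elimination leading to \eqref{eq:casmy} are routine, and the independence of the consensus-type conclusion from $n$ is immediate since the family of Cassini ovals \eqref{eq:casmy} does not depend on $m$ (only the finite set of sample points on it does).
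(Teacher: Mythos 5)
Your proof is correct, and it reaches the characteristic polynomial by a genuinely different route than the paper. The paper does not derive $(\lambda^2-3\lambda+1)^m-1$ from scratch: it invokes Theorem~4 of \cite{AgaevChebotarev2010}, which gives the characteristic polynomial of such ring digraphs in terms of modified Chebyshev polynomials $Z_n(\lambda)$, and then only has to evaluate $Z_2=\lambda^2-3\lambda+1$. Your block-circulant decomposition $\mathcal{L}=I_m\otimes D_0+C_m\otimes D_{-1}$, diagonalized by $F_m\otimes I_2$, makes the same polynomial appear transparently via $\prod_{k}(w-\omega_k)=w^m-1$ with $w=\lambda^2-3\lambda+1$, at the cost of the block bookkeeping you flag (which does check out, including the corner block $(1,m)$ matching $(C_m)_{1,0\equiv m}=1$); what the paper's citation buys is immediate generalizability to other periodic patterns of dropped arcs, where the relevant $Z_p$ has higher degree. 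For the locus, the two arguments are essentially the same elimination of the parameter $e^{j2\pi k/m}$, but yours is packaged more cleanly: writing $\tilde\lambda=2\lambda-3$ and reading off $|\tilde\lambda-\sqrt5|\,|\tilde\lambda+\sqrt5|=|\tilde\lambda^2-5|=4$ lands directly on Definition~\ref{def:cassini} (foci $(\pm\sqrt5,0)$, constant product $b^2=4$), whereas the paper separates real and imaginary parts, solves for $a_k=(x-3/2)^2-y^2-5/4$ and $b_k=2xy-3y$, and substitutes into $a_k^2+b_k^2=1$ before recognizing the quartic as \eqref{eq:casmy}. Both are complete; no gap in your version.
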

\begin{proof}
According to Theorem~4 in \cite{AgaevChebotarev2010} our graph is \emph{essentially cyclic} (i.e., the spectrum of $\mathcal{L}$ contains non-real eigenvalues) and the characteristic polynomial of $\mathcal{L}$ is of the form
$$
(Z_2(\lambda))^m -1,
$$
where $Z_n(\lambda) = (\lambda-2)Z_{n-1}(\lambda)-Z_{n-2}(\lambda)$ is a modified Chebyshev polynomial of the second kind with the initial conditions $Z_0(\lambda) \equiv 1$, $Z_1(\lambda) \equiv \lambda - 1$, see \cite{AgaevChebotarev2010} for details.
Armed with the knowledge gained from Theorem~4 in \cite{AgaevChebotarev2010} and Lemma~\ref{lem.cyc} we find   $Z_2 = \lambda^2 -3\lambda +1$, the characteristic polynomial of $\mathcal{L}$ is
\begin{equation}\label{eq.polynomial0}
(\lambda^2 -3\lambda +1)^m -1,
\end{equation}
and its roots are
\begin{equation}\label{eq.polynomial}
\lambda_k^{(1,2)} = 1.5 \pm 0.5\sqrt{5 + 4e^{j \frac{2 \pi k}{m}}}.
\end{equation}

Let the $k$th root of unity $z_k = e^{j \frac{2 \pi k}{m}}$ be
 $a_k+jb_k$, $k \in 0:m-1$. Then
\begin{equation}\label{eq.circle}
a_k^2 +b_k^2 = 1.
\end{equation}

The $2m$ roots of \eqref{eq.polynomial0} can be found from the equation
$$
\lambda^2 -3\lambda +1 - a_k - jb_k = 0, \; k = 0,1,\ldots, m-1.
$$
Let $\lambda = x+jy$. Then from
$$
x+jy = 1.5 \pm 0.5\sqrt{5 + 4a_k+j4b_k}
$$
we find $a_k = (x-3/2)^2-y^2-5/4$ and $b_k = 2yx - 3y$.
Taking into account \eqref{eq.circle} one arrives at
$$
((x-3/2)^2-y^2-5/4)^2+4y^2(x-3/2)^2=1.
$$
The last equation can be simply rewritten in the form of \eqref{eq:casmy}.
\end{proof}

The eigenvalues of $\mathcal{L}$ with $n=16$ and $n = 44$ are shown in Fig.~\ref{fig:eigenvalues}. The egg-shaped Cassini ovals can be recognized easily.
\begin{figure}[h!]
  \centering
    \includegraphics[width=9.5cm]{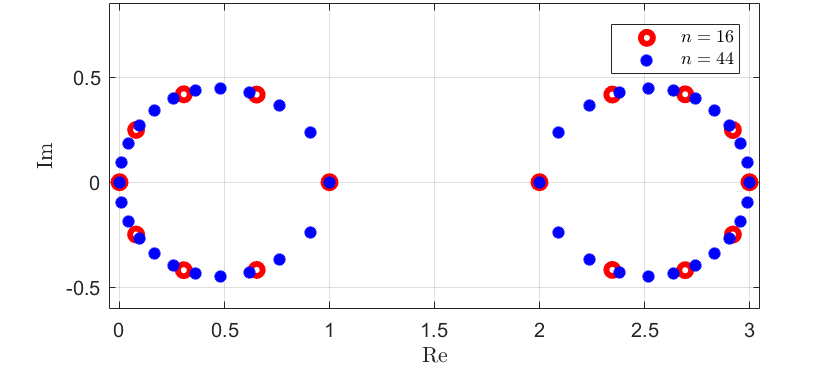}
    \caption{The eigenvalues of $\mathcal{L}$ with $n=16$ and $n = 44$.}
  \label{fig:eigenvalues}
\end{figure}

\begin{corollary}
\label{cor:eigandcurve}
The eigenvalues of $-\mathcal{L}$ are
\begin{equation}
\label{eq.eigenvaluessys}
\lambda_k^{(1,2)} = -1.5 \pm 0.5\sqrt{5 + 4e^{j \frac{2 \pi k}{m}}}
\end{equation}
and the equation of the corresponding Cassini curve is
\begin{equation}
\label{eq.cassini}
((x+3/2)^2-y^2-5/4)^2+4y^2(x+3/2)^2-1=0.
\end{equation}
\end{corollary}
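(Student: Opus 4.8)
The plan is to derive everything directly from Lemma~\ref{lem:laplacian} by exploiting the elementary fact that the spectrum of $-\mathcal{L}$ is the image of the spectrum of $\mathcal{L}$ under negation $\lambda \mapsto -\lambda$, i.e.\ under the point reflection $z \mapsto -z$ of the complex plane about the origin.

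First I would observe that if $\lambda$ is an eigenvalue of $\mathcal{L}$ then $-\lambda$ is an eigenvalue of $-\mathcal{L}$, with the same multiplicity. Applying this to the closed form \eqref{eq.polynomial} gives
$$
-\lambda_k^{(1,2)} = -\left(1.5 \pm 0.5\sqrt{5 + 4e^{j \frac{2 \pi k}{m}}}\right) = -1.5 \mp 0.5\sqrt{5 + 4e^{j \frac{2 \pi k}{m}}},
$$
and since the $\pm$ sign ranges over both choices as $k$ is fixed, the set $\{-\lambda_k^{(1)},-\lambda_k^{(2)}\}$ equals $\{-1.5 + 0.5\sqrt{\,\cdot\,},\, -1.5 - 0.5\sqrt{\,\cdot\,}\}$, which is precisely \eqref{eq.eigenvaluessys}. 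No new computation is needed beyond renaming.

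Next, for the curve: a point $x+jy$ lies on the eigenvalue locus of $-\mathcal{L}$ exactly when $-(x+jy) = (-x) + j(-y)$ lies on the eigenvalue locus of $\mathcal{L}$. Hence I would take the Cassini equation established at the end of the proof of Lemma~\ref{lem:laplacian},
$$
((x-3/2)^2-y^2-5/4)^2+4y^2(x-3/2)^2=1,
$$
and substitute $x \mapsto -x$, $y \mapsto -y$. Since $y$ appears only through $y^2$ and $x$ only through $(x-3/2)^2$, the substitution turns $(x-3/2)^2$ into $(x+3/2)^2$ and leaves the $y^2$ terms unchanged, yielding
$$
((x+3/2)^2-y^2-5/4)^2+4y^2(x+3/2)^2-1=0,
$$
which is \eqref{eq.cassini}. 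Equivalently one can note this is just the Cassini form \eqref{eq:casmy} with the recentering $\tilde x = 2(x+3/2)$, $\tilde y = 2y$, i.e.\ foci at $(\mp\sqrt5,0)$ after the same affine rescaling.

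There is no real obstacle here; the only thing to be careful about is bookkeeping of signs — making sure the reflection is about the origin (both coordinates flip), not merely a horizontal reflection, and checking that the $\pm$ in \eqref{eq.eigenvaluessys} is stated consistently with \eqref{eq.polynomial}. Everything else is immediate from Lemma~\ref{lem:laplacian}.
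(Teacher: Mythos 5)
Your proposal is correct and matches the paper's treatment: the corollary is stated there without proof precisely because it follows immediately from Lemma~\ref{lem:laplacian} by the negation $\lambda \mapsto -\lambda$, exactly as you argue. Your sign bookkeeping (the set of roots is unchanged under swapping $\pm$ for $\mp$, and $y$ enters only through $y^2$ so the origin reflection reduces to $x\mapsto -x$) is accurate.
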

\begin{remark}
\label{rem:cocnsensuscrit}
According to consensus criterion \emph{\cite{PolyakTsypkin1996, HaraTanakaIwasaki2014, LiDuan2017}} for the second-order agents \eqref{eq:agent0}, the consensus is reached if and only if all the nonzero eigenvalues of $-\mathcal{L}$ lie in the $\Omega$-region of $\phi(s) = s^2+\gamma s$. Checking this criterion requires computation of the spectrum. Below we propose a condition independent of the number of agents. It is sufficient yet simple and not that much conservative.
\end{remark}

\begin{theorem}
\label{th:consensuscondition}
The system of interconnected second-order agents \eqref{eq:network0} reaches a consensus in the sense of \eqref{eq:consensus} for all $\gamma > \sqrt{\frac{6}{7}}$.
\end{theorem}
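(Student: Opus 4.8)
The plan is to translate consensus into a geometric containment and then reduce that containment to a single polynomial inequality. By Lemma~\ref{lem:consensusregion} together with Remark~\ref{rem:cocnsensuscrit}, the network \eqref{eq:network0} reaches consensus if and only if every nonzero eigenvalue of $-\mathcal{L}$ lies in the $\Omega$-region of $\phi(s)=s^2+\gamma s$, which, as recalled just after Definition~\ref{def:consensusregion}, is the open interior of the parabola $\lambda=\phi(j\omega)=-\omega^2+j\gamma\omega$. Writing $\lambda=x+jy$ and eliminating $\omega$, that interior is exactly $\{(x,y):\gamma^2x+y^2<0\}$. Since $\mathcal{G}$ has a spanning converging tree, Lemma~\ref{lem:spannningtree} gives that every nonzero eigenvalue of $-\mathcal{L}$ has negative real part; hence it suffices to prove the $\gamma$-free bound $\tfrac{6}{7}x+y^2\le0$ for every point $x+jy$ of the Cassini curve~\eqref{eq.cassini}, with equality only at the origin (the single zero eigenvalue). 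Indeed, for $x<0$ and $\gamma^2>6/7$ this yields $\gamma^2x+y^2<\tfrac{6}{7}x+y^2\le0$, so the strictness the $\Omega$-region needs comes for free from the gap $\gamma^2>6/7$.

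To make the bound tractable I would first obtain an explicit parametrization of the curve. Set $u:=x+3/2$, so that $-x=3/2-u$ and the origin corresponds to $u=3/2$; then~\eqref{eq.cassini} reads $(u^2-y^2-5/4)^2+4u^2y^2=1$. Viewed as a quadratic in $y^2$, its discriminant simplifies to $20u^2+4$, and its only nonnegative root is $y^2=\sqrt{5u^2+1}-u^2-5/4$, which is nonnegative precisely for $u^2\in[1/4,9/4]$ (this recovers the two egg-shaped ovals and, in passing, shows the origin is the unique point of the curve with $x=0$). The target bound $y^2\le\tfrac{6}{7}(3/2-u)$ then becomes $\sqrt{5u^2+1}\le R(u)$ with $R(u):=u^2-\tfrac{6}{7}u+\tfrac{71}{28}$; as $R$ has negative discriminant and positive leading coefficient, $R(u)>0$ for all $u$, so squaring produces the equivalent polynomial inequality $R(u)^2-(5u^2+1)\ge0$.

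The heart of the argument — and the step I expect to be most laborious — is verifying this polynomial inequality together with its equality case. I would do so by exhibiting the factorization $784\bigl(R(u)^2-5u^2-1\bigr)=(2u-3)^2\,(196u^2+252u+473)$, checked by direct expansion. The double factor $(2u-3)^2$ is forced because the curve is tangent to the limiting parabola at the origin, so $u=3/2$ must be a double root; and the residual quadratic $196u^2+252u+473$ has negative discriminant and positive leading coefficient, hence is strictly positive. Therefore $R(u)^2-(5u^2+1)\ge0$ with equality only at $u=3/2$, which gives $\tfrac{6}{7}x+y^2\le0$ on the entire curve with equality only at the origin; the theorem then follows from the reduction in the first paragraph.

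A couple of remarks on why this should go through cleanly. The constant $6/7$ is not a lucky guess: expanding~\eqref{eq.cassini} near the origin gives $x=-\tfrac{7}{6}y^2+O(y^4)$, i.e. $y^2/(-x)\to6/7$, so $6/7$ is exactly the supremum of $y^2/(-x)$ over the curve and is sharp. The two ovals need no separate treatment — the single formula for $y^2$ covers both, and on the left oval (where $-x\ge2$ while $y^2\le1/5$) the bound holds with a wide margin. Finally, strictness of $\gamma^2x+y^2<0$ at every \emph{nonzero} eigenvalue is automatic, since such an eigenvalue satisfies $x<0$ and is distinct from the origin.
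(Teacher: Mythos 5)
Your proposal is correct, and I verified the two computational claims it hinges on: the discriminant of the Cassini equation viewed as a quadratic in $y^2$ is indeed $20u^2+4$, and the factorization $784\bigl(R(u)^2-5u^2-1\bigr)=(2u-3)^2\,(196u^2+252u+473)$ checks out by expansion, with the residual quadratic having negative discriminant. The overall reduction is the same as the paper's --- invoke the consensus-region criterion, place the nonzero eigenvalues of $-\mathcal{L}$ on the Cassini ovals, and ask when the ovals sit inside the parabola $y^2=-\gamma^2x$ --- but the decisive step is handled quite differently. The paper substitutes $y^2=-\gamma^2x$ into the Cassini equation to get the quartic \eqref{eq:paramequation}, discards the zero root, and argues via Vieta that the product of the cubic's roots equals $7p-6$, so that $p>6/7$ excludes the configuration ``one negative real root plus a conjugate pair.'' You instead solve the Cassini equation for $y^2$ as an explicit function of $u=x+3/2$ and prove the sharp, $\gamma$-free envelope inequality $y^2\le\tfrac{6}{7}(-x)$ on the whole curve, with equality only at the origin, then let the strict gap $\gamma^2>6/7$ combined with $x<0$ (from Lemma~\ref{lem:spannningtree}) deliver strict membership in $\Omega$. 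What each buys: the paper's route is shorter and stays entirely within the substitute-and-inspect framework, but its Vieta sign argument only rules out one root configuration of the cubic (it does not by itself exclude, say, two negative real roots and one positive, whose product is also positive); your factorization-based argument is airtight on this point, handles both ovals uniformly, and additionally establishes that $6/7$ is exactly $\sup y^2/(-x)$ over the curve --- i.e., the constant is sharp for the containment, which the paper only suggests numerically.
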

\begin{proof}
Since $\phi(j\omega) = -\omega^2+j\gamma \omega$, the equation can be rewritten as $y^2 = -\gamma^2 x$. Therefore, the consensus condition can be reduced to the condition that the Cassini ovals described by \eqref{eq.cassini} in Corollary~\ref{cor:eigandcurve} belong to the interior of the parabola $ y^2 = -\gamma^2x$ without intersection (except for the one at the origin), see Fig.~\ref{fig:omegacassini_1}.
\begin{figure}[h!]
  \centering
    \includegraphics[width=9cm]{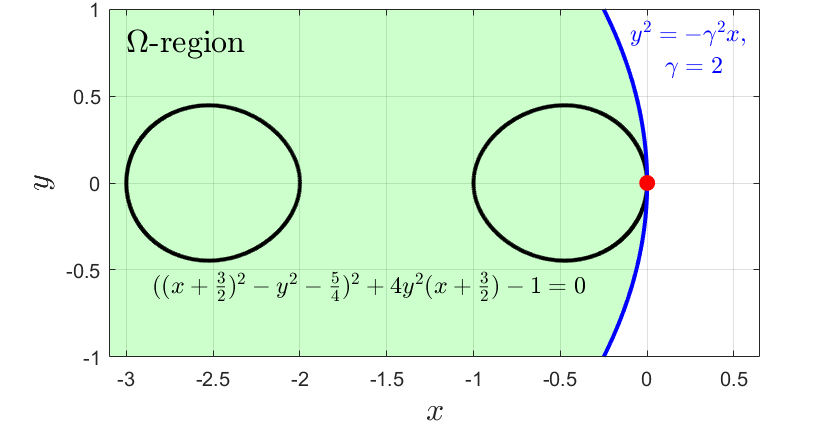}
    \caption{$\Omega$-region bounded by $y^2 = -\gamma^2 x$ and the Cassini ovals, $\gamma = 2$.}
  \label{fig:omegacassini_1}
\end{figure}
Substituting this into \eqref{eq.cassini} we arrive at the equation in $x$ with parameter $\gamma$:
\begin{equation}
\label{eq:paramequation}
x(x^3+(6-2\gamma^2)x^2+(\gamma^4-6\gamma^2+11)x+6-7\gamma^2) = 0.
\end{equation}

The zero root is out of our interest, thus let us study the properties of the cubic polynomial
\begin{equation}
\label{eq:paramequationcubic}
x^3+(6-2p)x^2+(p^2-6p+11)x+6-7p = 0,
\end{equation}
where $p = \gamma^2$.
The case  that determines the required condition corresponds to one intersection, and therefore one real negative root and a pair of roots with the same sign of their real parts. According to Vieta's formulas, a product of the roots satisfies $x_1x_2x_3 = 7p-6$ and is negative. Therefore, $p>6/7$ implies the absence of any intersection except for the one at (0,~0).

The consensus condition independent of the number of agents follows immediately.

\begin{figure}[h!]
  \centering
    \includegraphics[width=9cm]{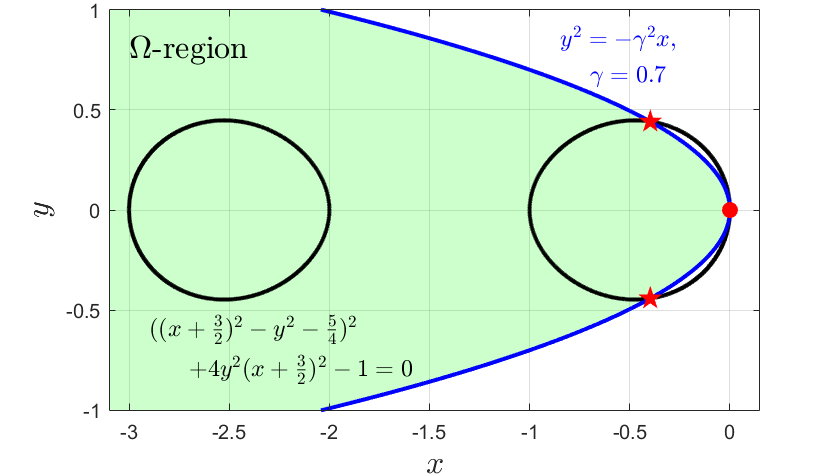}
    \caption{$\Omega$-region bounded by $y^2 = -\gamma^2 x$ and the Cassini ovals, $\gamma = 0.7$.}
  \label{fig:omegacassini_2}
\end{figure}
\end{proof}

\begin{remark}
In this paper, we limit ourselves to the case of $m\geq3$ due to the fact that for $m=1$, the considered graph does not exist and the case of $m=2$ results in a real spectrum of $\mathcal{L}$~\emph{\cite{AgaevChebotarev2010}} which implies reaching consensus for any $\gamma > 0$.
\end{remark}

\begin{remark}
By virtue of the consensus region approach, a similar condition for second-order agents whose dependency digraph is a Hamiltonian cycle \emph{(}Fig.~\emph{\ref{fig:graphcyc}}\emph{)} can be derived. In this case, the corresponding Laplacian matrix is a circulant matrix \emph{(}see, e.g., \emph{\cite{Bernstein2009}} for details\emph{)}. Its eigenvalues are located on a unit circle centered at $(-1,\; j0)$.
\begin{figure}[h!]
  \centering
    \includegraphics[width=5.5cm]{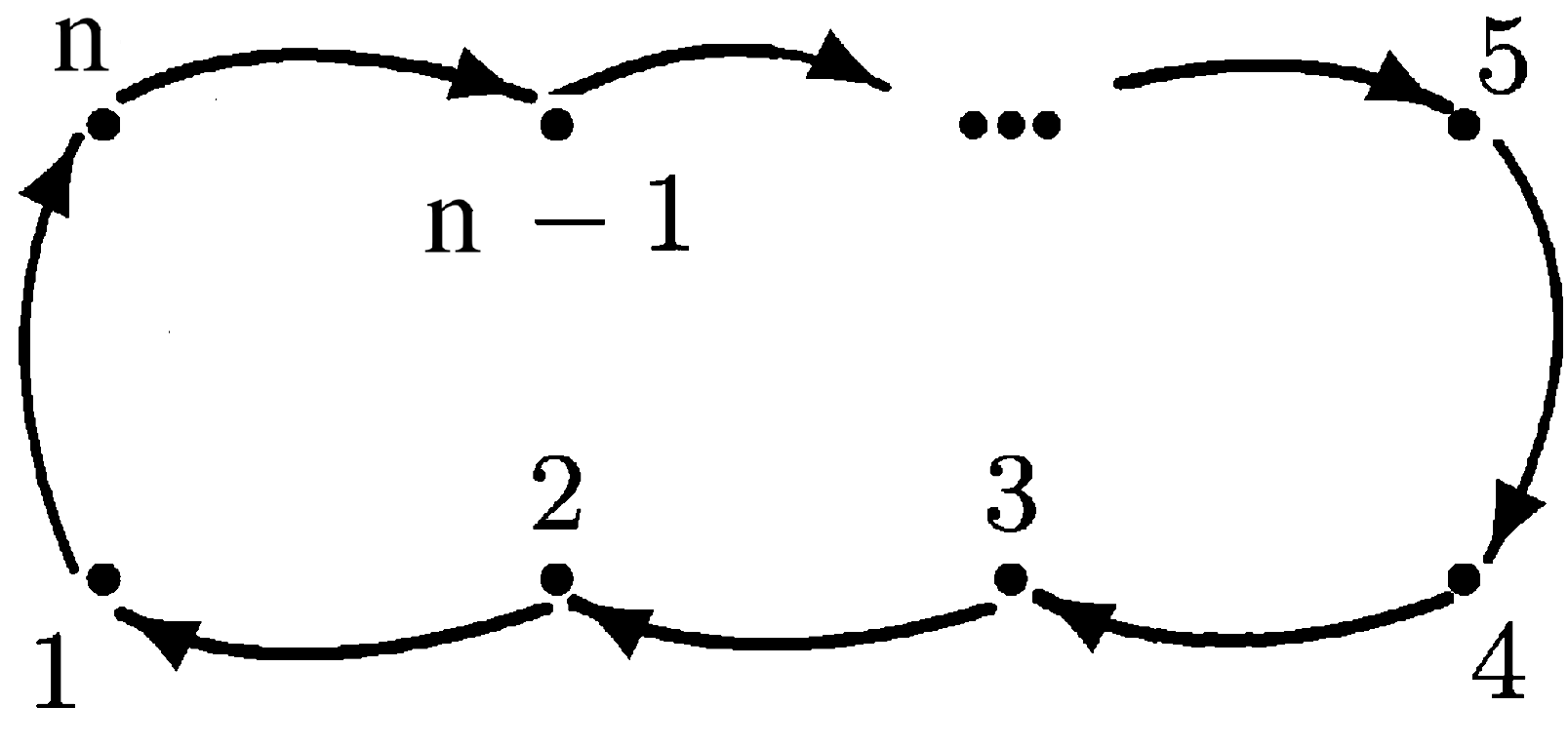}
    \caption{Diraph of the cyclic pursuit.}
  \label{fig:graphcyc}
\end{figure}
\begin{figure}[h!]
  \centering
    \includegraphics[width=9cm]{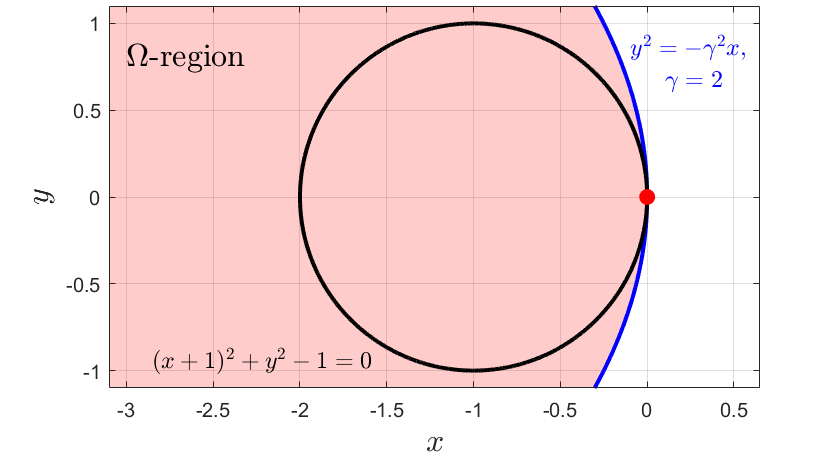}
    \caption{$\Omega$-region bounded by $y^2 = -\gamma^2 x$ and the unit circle, $\gamma = 2$.}
  \label{fig:omegacircle1}
\end{figure}
Recall that the consensus region defined by the transfer function of agent \eqref{eq:agent0} is the interior of a parabola. It can be verified that the equation for the intersection writes as $x^2+2x-\gamma^2x=0$, which implies that the absence of intersection \emph{(}except for the one at the origin\emph{)} holds true for $\gamma > \sqrt{2}$. This demonstrates that the consensus condition for second-order agents \eqref{eq:agent0} in cyclic pursuit can be derived easier as compared to, e.g., \emph{\cite{Sharmaetal2013}}.

The consensus condition for the case of undirected (or, in terms of digraph, two-directional) ring topology is even simpler. Since the spectrum of the Laplacian matrix is completely real, the eigenvalues $\lambda_2, \ldots, \lambda_n$ all belong to the $\Omega$-region for any $\gamma >0$ providing consensus in the sense of \eqref{eq:consensus}.
\end{remark}


\section{Numerical Examples}
First, we demonstrate the low conservatism of the proposed criterion. 

The sufficient condition of Theorem~\ref{th:consensuscondition} guarantees convergence to consensus in the sense of \eqref{eq:consensus} whenever $\gamma > \sqrt{6/7}$. However, small groups of agents may still reach consensus with dampings $\gamma \leq \sqrt{6/7} $: if all the nonzero eigenvalues $\lambda_i$, $i \in 2:n$, of $-\mathcal{L} $  lie inside the $\Omega$-region, then the conditions of the general criterion formulated in Lemma~\ref{lem:consensusregion} are satisfied (which implies that the corresponding eigenvalues $\hat \lambda_i$, $i \in 2:2n$, of the system matrix $F$ represented by \eqref{eq:F}  lie in the open left half-plane of the complex plane). This can be verified as follows. Let us set $\gamma = \sqrt{6/7}$ and compute the eigenvalues of $F$ for various numbers of agents $n=6,8,10, \ldots$. Taking the maximum of their real parts we observe that this maximum approaches zero as the number of agents increases, see Fig.~\ref{fig_pic4}.
\vspace{-2ex}
 \begin{figure}[h]
	\centerline{\includegraphics[width=9.5cm]{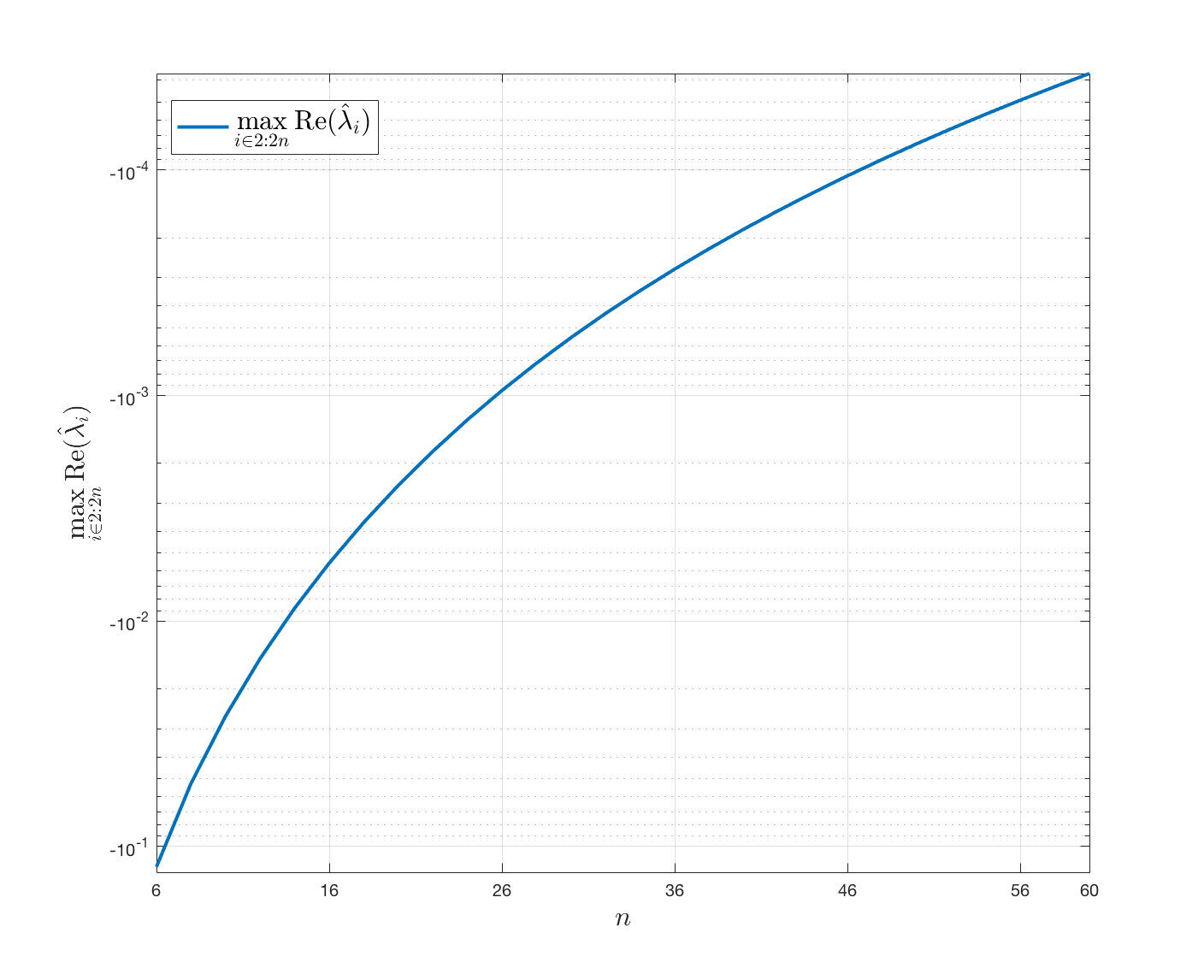}}
	\vspace{-2ex}
	\caption{The values of $\max\limits_{i \in 2:2n}{\rm{Re}}(\hat \lambda_i) $ subject to $n = 6,8,10,\ldots, 60$ (logarithmic scale); the damping coefficient is $\gamma = \sqrt{6/7}$.}
	\label{fig_pic4}
\end{figure}

Now, to illustrate the dynamics of convergence to consensus, consider a formation of $n=50$ identical second-order agents
\eqref{eq:agent0} linked by the topology depicted in Fig.~\ref{fig:graph}. Let the initial state of each agent be randomly generated within $[0, 10]$.

System \eqref{eq:network0} exhibits stable behavior of the transients with a damping coefficient $\gamma = 2$, see Fig.~\ref{fig_pic1}. Computation of  the eigenvalues of $F$ yields
$\max\limits_{i \in 2:2n}{\rm{Re}}( \hat \lambda_i) = -0.0032.$
 \begin{figure}[h]
	\centerline{\includegraphics[width=9.5cm]{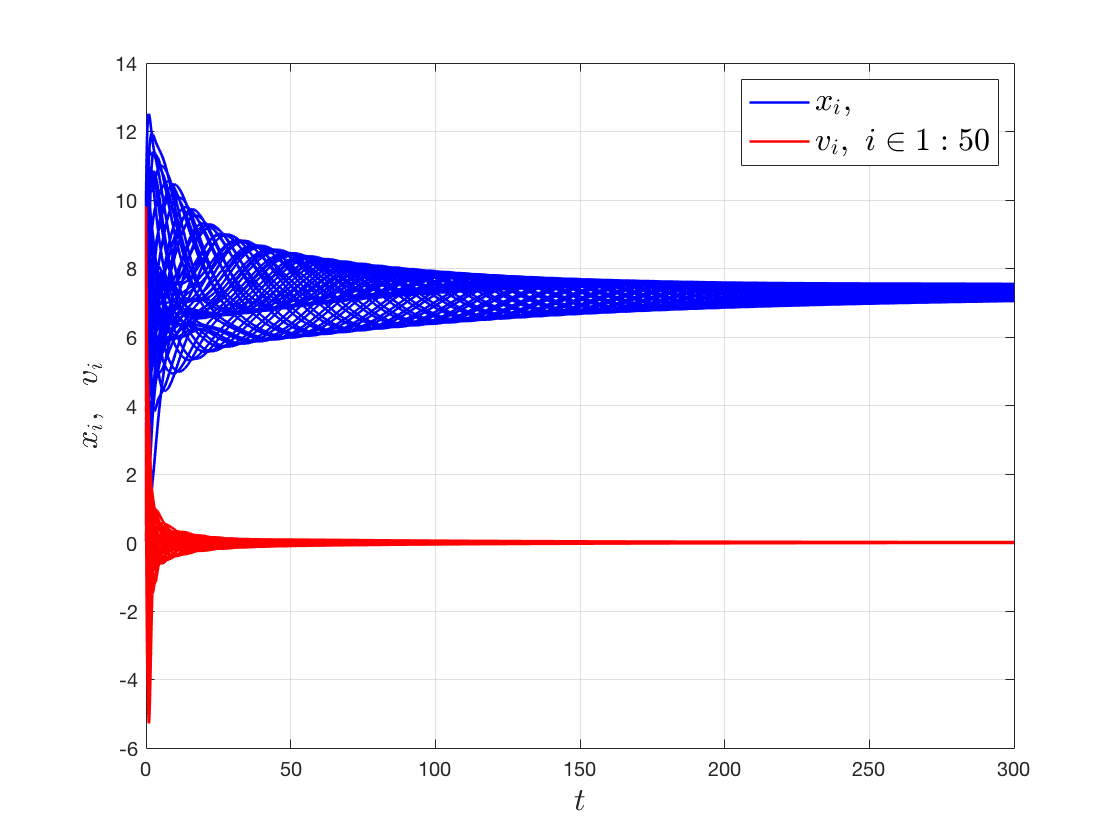}}
	\caption{The trajectories of the system \eqref{eq:syssecondorder} with $n=50$ agents, $\gamma = 2$.}
	\label{fig_pic1}
\end{figure}

In the case of smaller gain $\gamma = 0.95$ close enough to the stability margin $\gamma = \sqrt{6/7} \approx 0.926$ presented by Theorem~\ref{th:consensuscondition} we still observe similar behavior, see Fig.~\ref{fig_pic2}.
 \begin{figure}[h]
	\centerline{\includegraphics[width=9.5cm]{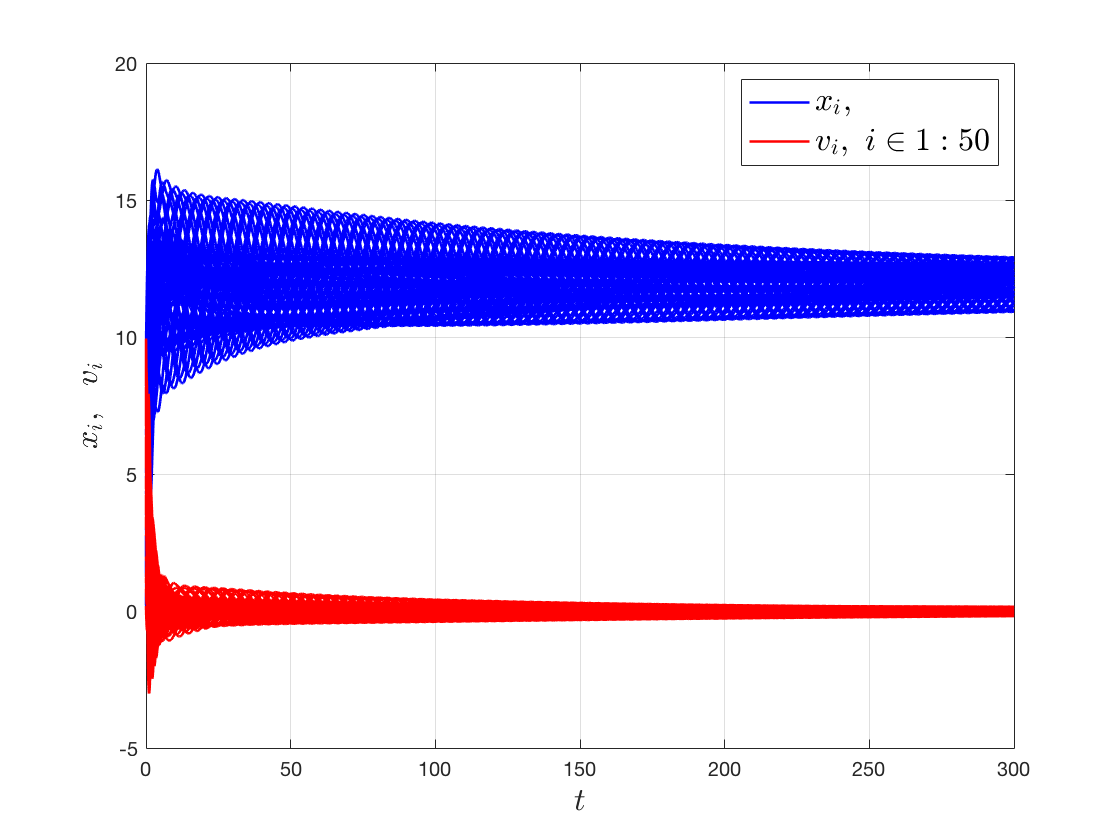}}
	\caption{The trajectories of the system \eqref{eq:syssecondorder} with $n=50$ agents, $\gamma = 0.95$.}
	\label{fig_pic2}
\end{figure}

The dynamics becomes different when the consensus condition is violated. For example, for $\gamma = 0.9$ the transients diverge; $\max\limits_{i \in 2:2n}{\rm{Re}}(\hat \lambda_i) = 0.000633$.
 \begin{figure}[h]
	\centerline{\includegraphics[width=9.5cm]{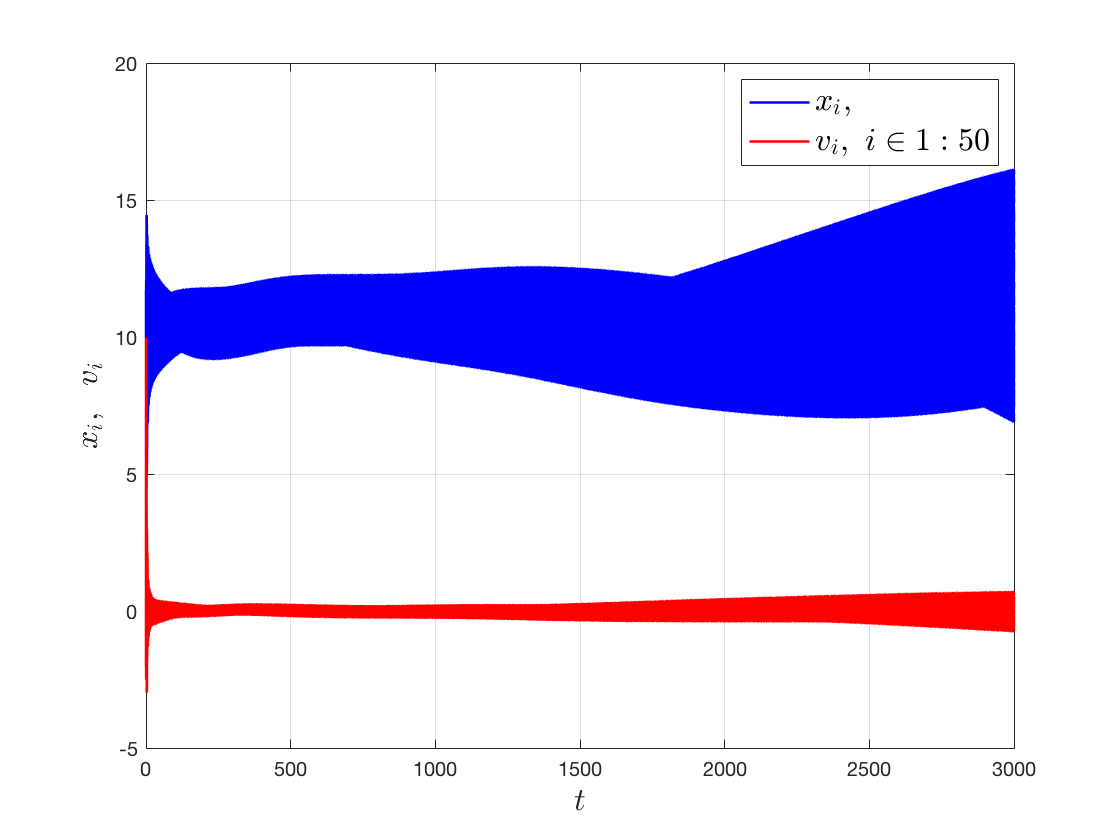}}
	\caption{The trajectories of the system \eqref{eq:syssecondorder} with $n=50$ agents, $\gamma = 0.9$.}
	\label{fig_pic3}
\end{figure}
Fig.~\ref{fig_pic3} presents the results of simulations for this unstable case.

Table~1 illustrates the dependence of the actual consensus margin $\gamma$ on the number of agents $n$.
\begin{table}[ht]
	\caption{Consensus margins}
	\centering 
	\begin{tabular}{| l | l | l | l | l | l | l |}
		\hline
		$n$ & 10 & 20 & 30 & 40 & 50 & 60 \\ \hline
		$\gamma$ & 0.8195 & 0.8999 & 0.9149 & 0.9195 & 0.9218 & 0.9230 \\
		\hline
	\end{tabular}
\end{table}


Therefore, the simulation results confirm the theoretical conclusions of Theorem~\ref{th:consensuscondition} and demonstrate low conservatism of the provided sufficient condition of consensus for a large number of agents.

\section{Conclusions}
The contribution of the paper is threefold:

\begin{itemize}
	\item First, the specific communication ring topology was investigated; it was discovered that the eigenvalues of the corresponding Laplacian matrix lie on the Cassini ovals;
	\item Second,  the criterion for reaching consensus was proposed that relies solely on the value of the damping coefficient of a single agent and neglects the number of them;
	\item Third, the theoretical results were supplemented by numerical experiments. It was shown that for a large number of interacting agents, the sufficient condition presented by Theorem~\ref{th:consensuscondition} is quite close to the necessary one.
\end{itemize}

The possible extensions include establishing consensus conditions for second-order agents distributed on other regular topologies together with the control protocols that use solely local measurements. These problems will be the subject of continuing research.


\begin{thebibliography}{00}


\bibitem{ProskurnikovTempo2017}
A. V. Proskurnikov and R. Tempo, ``A Tutorial on Modeling and Analysis of Dynamic Social Networks. Part I,'' Annual Reviews in Control, no. 43, pp. 65--79, 2017.


\bibitem{Bullo2018}
F. Bullo, Lectures on Network Systems (With contributions by J. Cort\'es, F. D\"orfler, and S. Mart\'inez), 2018, http://motion.me.ucsb.edu/book-lns/


\bibitem{Lewisetal2014}
F. L. Lewis, H. Zhang, K. Hengster-Movric, and A. Das, Cooperative Control of Multi-Agent Systems:
Optimal and Adaptive Design Approaches, London: Springer, 2014.
	

\bibitem{LiDuan2017}
Z. Li and Z. Duan, Cooperative Control of Multi-Agent Systems: A Consensus Region Approach, CRC Press, 2017.


\bibitem{Darboux1878}  J.~G.~Darboux,
\newblock ``Sur un Probl\`eme de G\'eom\'etrie \'El\'ementaire,'' \newblock Bulletin des Sciences Math\'ematiques et Astronomiques, vol. 2, no. 1, pp. 298--304, 1878.

\bibitem{Nahin2007} P.~J.~Nahin,
\newblock Chases and Escapes: The Mathematics of Pursuit and Evasion,
\newblock Princeton University Press, 2007.

\bibitem{Sharmaetal2013}
B. R. Sharma, S. Ramakrishnan, and M. Kumar, ``Cyclic Pursuit in a Multi-Agent Robotic System with Double-Integrator Dynamics under Linear Interactions,'' Robotica, vol. 31, no. 7, pp. 1037--1050, 2013.


\bibitem{ElorBruckstein2011}
Y. Elor and A. M. Bruckstein, ``Uniform Multi-Agent Deployment on a Ring,'' Theor. Comput. Sci., vol. 412, no. 8--10, pp. 783--795, 2011.


\bibitem{MarshallBrouckeFrancis2004} J.~A.~Marshall, M.~E.~Broucke and B.~A.~Francis,
\newblock ``Formations of Vehicles in Cyclic Pursuit,''
\newblock IEEE Transactions On Automatic Control, vol. 49, no. 11, pp. 1963--1974, 2004.


\bibitem{SmithBrouckeFrancis2005} S.~L.~Smith, M.~E.~Broucke, and B.~A.~Francis,
\newblock ``A Hierarchical Cyclic Pursuit Scheme for Vehicle Networks,''
\newblock Automatica, vol. 41, no. 6, pp. 1045--1053, 2005.


\bibitem{MukherjeeGhose2016}
D. Mukherjee and D. Ghose,``Generalized Hierarchical Cyclic Pursuit,'' Automatica,
vol. 71, pp. 318--323, 2016.


\bibitem{SinhaGhose2006}
A. Sinha and D. Ghose, ``Generalization of Linear Cyclic Pursuit with Application to Rendezvous of Multiple Autonomous Agents,'' IEEE Transactions on Automatic
Control, vol. 51, no. 11, pp. 1819--1824, 2006.


\bibitem{ElmachtoubVanLoan2010}
A. N. Elmachtoub and C. F. van Loan, ``From Random Polygon to Ellipse. An Eigenanalysis,'' SIAM Rev., vol. 52, no. 1, pp. 151--170, 2010.


\bibitem{Shcherbakov2011}
P. S. Shcherbakov, ``Formation Control: The Van Loan Scheme and Other Algorithms,'' Autom. Remote Control, vol. 72, no. 10, pp. 2210--2219, 2011.


\bibitem{RamirezRiberosPavoneetal2010}
J. L. Ramirez-Riberos, M. Pavone, E. Frazzoli, and D. W. Miller, ``Distributed Control of Spacecraft Formations via Cyclic Pursuit. Theory and Experiments,'' AIAA J. Guidance, Control, Dynamics, vol. 33, no. 5, pp. 1655--1669, 2010.


\bibitem{MukherjeeZelazo2018}
D. Mukherjee and D. Zelazo, ``Robust Consensus of Higher Order Agents over Cycle Graphs,'' in Proc.  of the 58th Israel Annual Conference on Aerospace Sciences, pp. 1072--1083, 2018.


\bibitem{WagnerBruckstein1997}
I. A. Wagner and A. M. Bruckstein, ``Row Straightening via Local Interactions,'' Circuits Syst. Signal Process., vol. 16, no. 2, pp. 287--305, 1997.


\bibitem{KvintoParsegov2012}
Ya. I. Kvinto and S. E. Parsegov, ``Equidistant Arrangement of Agents on Line: Analysis of the Algorithm and Its Generalization,'' Autom. Remote Control, vol. 73, no. 11, pp. 1784--1793, 2012.


\bibitem{ProskurnikovParsegov2016}
A. V. Proskurnikov and S. E. Parsegov, ``Problem of Uniform Deployment on a Line Segment for Second-Order Agents,'' Autom. Remote Control, vol. 77, no. 7, pp. 1248--1258, 2017.


\bibitem{Goldin2013}
D.~Goldin, ``Double Integrator Consensus Systems with Application to Power Systems,'' in Proc. 4th IFAC Workshop NecSys-2013, pp. 206--211, 2013.



\bibitem{RenCao2011}
W. Ren and Y. C. Cao, Distributed Coordination of Multi-Agent Networks, Springer, London, 2011.


\bibitem{Ren2008a}
W. Ren. ``On Consensus Algorithms for Double-Integrator Dynamics,'' IEEE Trans. Autom. Control,
vol.~53, no.~6, pp.~1503--1509, 2008.


\bibitem{AgaevChebotarev2002}
P. Chebotarev and R. Agaev, ``Forest Matrices around the Laplacian Matrix,''
Linear Algebra and Its Applications, vol. 356, pp. 253--274, 2002.


\bibitem{AgaevChebotarev2000}
R. P. Agaev and P. Y. Chebotarev, ``The Matrix of Maximum Out Forests of a Digraph and Its Applications,'' Autom. Remote Control, vol. 61, no. 9, pp. 1424--1450, 2000.

\bibitem{AgaevChebotarev2001}
R. P. Agaev and P. Yu. Chebotarev, ``Spanning Forests of a Digraph and Their Applications,'' Autom. Remote Control, vol. 62, no. 3, pp. 443--466, 2001.


\bibitem{PolyakTsypkin1996}
B.~T. Polyak and Y.~Z. Tsypkin,``Stability and Robust Stability of Uniform Systems,'' Autom. Remote Control, vol. 57, no. 11, pp. 1606--1617, 1996.

\bibitem{HaraTanakaIwasaki2014}
S. Hara, H. Tanaka, and T. Iwasaki, ``Stability Analysis of Systems with Generalized Frequency Variables,'' IEEE Trans. Autom. Control, vol.~59, no.~2, pp.~313--326, 2014.


\bibitem{HaraHayakawaetal2007}
S. Hara, T. Hayakawa, and H. Sugata, ``Stability Analysis of Linear Systems with Generalized Frequency Variables and Its Applications to Formation Control,'' in Proc. IEEE Conf. Decision Control, pp. 1459--1466, 2007.


\bibitem{Neumann2007}
O. Neumann, ``Cyclotomy: From Euler through Vandermonde to Gauss,'' in  Leonhard Euler: Life, Work and Legacy, R.~E.~Bradley and C.~E.~Sandifer, Eds. Amsterdam: Elsevier, 2007, pp. 323--362.


\bibitem{Lawrence1972}
J. D. Lawrence, A Catalog of Special Plane Curves, New York: Dover,
1972.


\bibitem{AgaevChebotarev2010}
R.~P. Agaev and P.~Yu. Chebotarev, ``Which Digraphs with Ring Structure are Essentially Cyclic?'' Adv. Appl. Math. vol. 45, pp. 232--251, 2010.



\bibitem{Bernstein2009}
D. Bernstein, Matrix Mathematics: Theory, Facts, and Formulas,  Princeton University Press, 2009.














	





























\end{thebibliography}
\end{document}